\documentclass[conference]{IEEEtran}
\IEEEoverridecommandlockouts

\usepackage{cite}
\usepackage{amsmath,amssymb,amsfonts}
\usepackage{algorithm}
\usepackage[noend]{algpseudocode}
\usepackage{graphicx}
\usepackage{textcomp}
\usepackage{xcolor}
\usepackage{array}
\usepackage{booktabs}
\usepackage{multirow}
\usepackage{comment}
\usepackage{caption}
\usepackage{subcaption}
\usepackage{tabularx}
\usepackage{bm}

\usepackage{tikz}
\usepackage{pgfplots}
\pgfplotsset{scaled y ticks=false}
\usepackage{mathrsfs}
\usepackage{mathtools}
\usetikzlibrary{arrows}
\usepgfplotslibrary{external}
\usepackage{xcolor}
\usepackage{pifont}
\pgfplotsset{grid style={dashed,gray}}
\pgfplotsset{minor grid style={dotted,gray}}
\pgfplotsset{major grid style={dashed,gray}}

\makeatletter
\newcommand{\multiline}[1]{%
  \begin{tabularx}{\dimexpr\linewidth-\ALG@thistlm}[t]{@{}X@{}}
    #1
  \end{tabularx}
}
\makeatother

\newcommand*{\rom}[1]{\expandafter\@slowromancap\romannumeral #1@}
\makeatother
\allowdisplaybreaks

\usepackage{setspace}
\usepackage{mathtools}
\makeatother
\DeclareMathOperator*{\argmax}{argmax} 
\makeatother
\DeclareMathOperator*{\argmin}{argmin} 

\usepackage[T1]{fontenc}
\usepackage{float}

\graphicspath{ {Images/} }

\def\BibTeX{{\rm B\kern-.05em{\sc i\kern-.025em b}\kern-.08em
    T\kern-.1667em\lower.7ex\hbox{E}\kern-.125emX}}

\usepackage{acronym}
\newacro{nf} [NF] {near-field}
\newacro{ff} [FF] {far-field}
\newacro{5g} [5G] {fifth-generation}
\newacro{6g} [6G] {sixth-generation}
\newacro{isac} [ISAC] {integrated sensing and communication}
\newacro{ul} [UL] {uplink}
\newacro{mu} [MU] {multi-user}
\newacro{csi} [CSI] {channel state information}

\newacro{em} [EM] {electromagnetic}
\newacro{siso} [SISO] {single input single output}
\newacro{miso} [MISO] {multiple input single output}
\newacro{mimo} [MIMO] {multiple input multiple output}
\newacro{xl-mimo} [XL-MISO] {extremely large-scale MIMO}
\newacro{ras} [RAs] {reconfigurable antennas}
\newacro{los} [LoS] {line-of-sight}
\newacro{nlos} [NLoS] {non-line-of-sight}
\newacro{shod} [SHOD] {spherical harmonious orthogonal decomposition}
\newacro{ofdm} [OFDM] {orthogonal frequency division multiplexing}
\newacro{dof} [DoF] {degree of freedom}
\newacro{fim} [FIM] {Fisher information matrix}
\newacro{em} [EM] {electromagnetics}
\newacro{er-fas} [ER-FAS] {electromagnetically reconfigurable FAS}
\newacro{eras} [ERAs] {electromagnetically reconfigurable antennas}
\newacro{aod} [AOD] {angle-of-departure}
\newacro{aoa} [AOA] {angle-of-arrival}
\newacro{aoas} [AOAs] {angles-of-arrival}
\newacro{sp} [SP] {scatter point}
\newacro{ml} [ML] {maximum likelihood}
\newacro{mse} [MSE] {mean square error}
\newacro{snr} [SNR] {signal-to-noise ratio}
\newacro{lmr} [LMR] {line-of-sight to multipath ratio}

\newacro{sv} [SV] {Saleh-Valenzuela}

\newacro{rmse} [RMSE] {root mean square error}
\newacro{crb} [CRB] {Cram\'er-Rao bound}
\newacro{peb} [PEB] {position error bound}
\newacro{kld} [KLD] {Kullback–Leibler divergence}
\newacro{siso} [SISO] {single-input-single-output}
\newacro{mimo} [MIMO] {multiple-input multiple-output}
\newacro{mcrb} [MCRB] {misspecified Cram\'er-Rao bound}
\newacro{bs} [BS] {base station}
\newacro{ue} [UE] {user equipment}
\newacro{arv} [ARV] {array response vector}
\newacro{earv} [EARV] {effective ARV}
\newacro{upa} [UPA] {uniform planar array}
\newacro{rf} [RF] {radio frequency}
\newacro{bb} [BB] {baseband}
\newacro{ris} [RIS] {reconfigurable intelligent surface}
\newacro{rms} [RMS] {root mean square}
\newacro{psd} [PSD] {positive semidefinite}
\newacro{lmi} [LMI] {linear matrix inequality}
\newacro{bca} [BCA] {block-coordinate ascent}
\newacro{bcd} [BCD] {block-coordinate descent}

\newacro{ma} [MA] {movable antenna}
\newacro{6dma} [6DMA] {six-dimensional movable antenna}
\newacro{pra} [PRA] {pattern reconfigurable antenna}
\newacro{ra} [RA] {reconfigurable antenna}
\newacro{pra_p} [PRAs] {pattern reconfigurable antennas}
\newacro{espar} [ESPAR] {electronically steerable parasitic array radiator}
\newacro{music} [MUSIC] {multiple signal classification}
\newacro{rss} [RSS] {received signal strength}
\newacro{toa} [TOA] {time of arrival}
\newacro{wsn} [WSN] {wireless sensor network}
\newacro{nm} [NM] {Nelder-Mead}
\newacro{ls} [LS] {least-squares}
\newacro{sdp} [SDP] {semidefinite program}
\newacro{iot} [IoT] {internet of things}

\newacro{mpc} [MPC] {multipath component}
\newacro{mpc_p} [MPCs] {multipath components}

\newacro{fas} [FAS] {fluid antenna system}
\newacro{fas_p} [FAS] {fluid antenna systems}
\newacro{sr-fas} [SR-FAS] {spatially reconfigurable FAS}
\newacro{ris} [RIS] {reconfigurable intelligent surface}
\newacro{ris_p} [RISs] {reconfigurable intelligent surfaces}
\newacro{rcs} [RCS] {radar cross section}

\usepackage{titlesec}

\usepackage{accents}

\usepackage{amsthm}

\theoremstyle{plain}

\newtheoremstyle{iremark}
  {\topsep}   
  {\topsep}   
  {\upshape}  
  {0.2in}       
  {\itshape}  
  {.}         
  {5pt plus 1pt minus 1pt} 
  {\thmname{#1}\thmnumber{ \itshape#2}\thmnote{ (#3)}} 

\usepackage{amsthm}

\newtheorem{theorem}{Theorem}
\newtheorem{lemma}[theorem]{Lemma}

\newtheorem{proposition}{Proposition}

\theoremstyle{definition}
\newtheorem*{proof}{Proof}

\makeatletter
\newcommand*\rel@kern[1]{\kern#1\dimexpr\macc@kerna}
\newcommand*\widebar[1]{%
  \begingroup
  \def\mathaccent##1##2{%
    \rel@kern{0.8}%
    \overline{\rel@kern{-0.8}\macc@nucleus\rel@kern{0.2}}%
    \rel@kern{-0.2}%
  }%
  \macc@depth\@ne
  \let\math@bgroup\@empty \let\math@egroup\macc@set@skewchar
  \mathsurround\z@ \frozen@everymath{\mathgroup\macc@group\relax}%
  \macc@set@skewchar\relax
  \let\mathaccentV\macc@nested@a
  \macc@nested@a\relax111{#1}%
  \endgroup
}
\makeatother

\usepackage{fancyhdr}

\fancypagestyle{arxivnotice}{
  \fancyhf{} 
  
  \fancyhead[C]{\small \textit{This paper has been accepted for publication in the 2026 IEEE International Conference on Communications (ICC).}}
  
  \fancyfoot[C]{\scriptsize \copyright~2026 IEEE. Personal use of this material is permitted. Permission from IEEE must be obtained for all other uses, in any current or future media, including reprinting/republishing this material for advertising or promotional purposes, creating new collective works, for resale or redistribution to servers or lists, or reuse of any copyrighted component of this work in other works.}
}

\begin{document}
\title{
Near-Field Localization via Reconfigurable Antennas
}

\author{
\IEEEauthorblockN{Alireza Fadakar\IEEEauthorrefmark{1}, 
Yuchen Zhang\IEEEauthorrefmark{2},
Hui Chen\IEEEauthorrefmark{3},
Musa Furkan Keskin\IEEEauthorrefmark{3}, Henk Wymeersch\IEEEauthorrefmark{3}, Andreas F. Molisch\IEEEauthorrefmark{1}}
\IEEEauthorblockA{\IEEEauthorrefmark{1}University of Southern California, Los Angeles, CA, USA
\\\{fadakarg, molisch\}@usc.edu}
\IEEEauthorblockA{\IEEEauthorrefmark{2}King Abdullah University of Science and Technology (KAUST), Thuwal, Kingdom of Saudi Arabia
\\ yuchen.zhang@kaust.edu.sa}
\IEEEauthorblockA{\IEEEauthorrefmark{3}Chalmers University of Technology, Gothenburg, Sweden
\\\{hui.chen, furkan, henkw\}@chalmers.se}
}

\maketitle
\thispagestyle{arxivnotice}

\begin{abstract}
Reconfigurable antennas (RAs) utilize the electromagnetic (EM) domain to provide dynamic control over antenna radiation patterns, which offers an effective way to enhance power efficiency in wireless links. 
Unlike conventional arrays with fixed element patterns, RAs enable on-demand beam-pattern synthesis by directly controlling each antenna's EM characteristics.
While existing research on RAs has primarily focused on improving spectral efficiency, this paper explores their application for downlink localization. 
Moreover, the majority of existing works focus on far-field scenarios with little attention on near-field (NF). 
Motivated by these gaps, we consider a synthesis model in which each antenna generates desired beampatterns from a finite set of EM basis functions. 
We then formulate a joint optimization problem for the baseband (BB) and EM precoders with the objective of minimizing the user equipment (UE) position error bound (PEB) in NF conditions. 
Our analytical derivations and extensive simulation results demonstrate that the proposed hybrid precoder design for RAs significantly improves UE positioning accuracy compared to traditional non-reconfigurable arrays.
\end{abstract}

\begin{IEEEkeywords}
Localization, near-field, reconfigurable antenna, maximum-likelihood.
\end{IEEEkeywords}
\bstctlcite{IEEEexample:BSTcontrol}

\section{Introduction}
Localization has become widespread in applications such as wireless networks, augmented reality, and the \ac{iot} \cite{del2018survey, fadakar2024multi} within the framework of \ac{isac}. 
Localization methods are commonly categorized by whether the source lies in the receiver's \ac{ff} or \ac{nf} region \cite{Friedlander2019NF}. 
Sources in the \ac{nf} region produce spherical wavefronts at the receiver, whereas \ac{ff} sources are seen as plane waves and are thus described only by \ac{aoa} or \ac{aod} \cite{fadakar2025near,Ozturk2024pixel}.
Therefore, algorithms developed under the \ac{ff} assumption are generally unsuitable for \ac{nf} source localization. 

Both \ac{ff} and \ac{nf} localization have been extensively investigated across diverse scenarios in the literature (e.g., \cite{Friedlander2019NF, Lu2024NF, Yang2025NF, fascista2022ris, fadakar2024multi, fadakar2025near, fadakar2025stacked} and the references therein). 
Most prior works have emphasized beamforming via per-element phase and amplitude control to steer power toward desired directions. 
A major limitation in this body of work, however, is the common practice of modeling antenna arrays using elements with fixed, frequently identical radiation patterns \cite{liu2025tri}.
The constraints imposed by conventional pattern-fixed antennas are becoming increasingly relevant in emerging applications, such as \ac{nf} communications \cite{liu2025tri,You2025jsac}.
In particular, very narrow element patterns yield high gain only within limited angular sectors and suffer steep loss near endfire, while broad element patterns improve coverage but reduce peak gain. 

\Ac{ra} designs address this trade-off by altering internal settings to reshape radiation patterns\cite{Zhang2022PRA, Fadakar2026Hybrid}, tune frequency response\cite{Song2014frequency}, or change polarization\cite{Zheng2025Polarization}. 
Among these capabilities, dynamic control of the element radiation pattern is particularly impactful for communication and sensing tasks because it directly affects both transmitted and received power and thus overall system performance \cite{liu2025tri, Fadakar2026Hybrid}. 
In this paper, we also focus on radiation pattern-based \ac{ra}.
The element radiation of \ac{ra} can be reconfigured in real time by modifying the current distribution of an \ac{rf} radiator \cite{wang2025RA}. 
Typical fabrication techniques comprise pixelated parasitic layouts \cite{Zhang2025pixel, liu2025tri}, \ac{espar} architectures featuring a single driven element surrounded by passive parasitics \cite{Han2021Characteristic, liu2025tri}, and liquid-metal fluidic implementations \cite{wang2025RA, Fadakar2026Hybrid}.
Independent of the underlying fabrication process, an \ac{ra} can actively modify its \ac{em} characteristics. 
This flexibility adds new \ac{dof} in the \ac{em} domain, which can be exploited to alleviate conventional performance bottlenecks and improve spectral and spatial efficiency of wireless communication and sensing systems.

Recent advances in \ac{ra} technology have shown growing interest in their integration within 
antenna arrays \cite{Ying2024Reconfigurable}. 
While several works have demonstrated the spectral-efficiency benefits of \ac{ra} arrays\cite{liu2025tri, wang2025RA}, their potential for wireless sensing, particularly for localization, has received limited attention. 
Furthermore, the majority of prior studies in both communications and sensing assume \ac{ff} conditions \cite{Fadakar2026Hybrid}. Motivated by these gaps, this paper targets localization in \ac{nf} scenarios. 
In particular, the main contributions of this work are summarized as follows:
\begin{itemize}
\item 
We develop a synthesis-based signal model for localization of a \ac{nf} user with an \ac{ra} array, where each element forms a specific beampattern from a finite set of basis functions. 
From this model, we derive expressions for the \ac{fim} and the resulting \ac{crb}, and use these expressions to establish theoretical limits on \ac{nf} localization performance. 
\item
Building on this analysis, we formulate a joint design problem that jointly optimizes the \ac{em} and digital precoders to minimize the \ac{crb}.
We propose a computationally efficient, multi-beam scheme that accounts for \ac{ue} position uncertainty based on the optimal parameter optimization with \ac{ue} position knowledge.
\item 
We propose a \ac{ml}-based \ac{nf} localization estimator and validate it through extensive simulations that compare the \ac{crb} and localization \ac{rmse} against conventional fixed-pattern arrays. 
Numerical results demonstrate pronounced localization improvements offered by \ac{ra}-equipped arrays.
\end{itemize}

Matrices and vectors are represented by boldface uppercase ($\mathbf{X}$) and lowercase ($\mathbf{x}$) letters, respectively. 
The superscripts $(\cdot)^{\mathsf{T}}$, $(\cdot)^{\mathsf{H}}$, and $(\cdot)^{-1}$ denote the transpose, Hermitian (conjugate transpose), and matrix inverse, respectively. 
Standard operations include the $\ell_2$ (Euclidean) norm, denoted by $\lVert\mathbf{x}\rVert$, and the Hadamard product, $\mathbf{A}\odot\mathbf{B}$. 
We use $[\mathbf{x}_1,\ldots,\mathbf{x}_n]$ for the horizontal concatenation of vectors, $\mathrm{diag}(\mathbf{x})$ for a diagonal matrix with the elements of $\mathbf{x}$ on its main diagonal, and $\mathrm{blkdiag}\bigl\{\mathbf{A}_{1},\,\dots,\,\mathbf{A}_{n}\bigr\}$ for a block diagonal matrix. $\mathbf{I}_n$ denotes the $n\times n$ identity matrix.

\section{System and Signal Model}
\subsection{System Model}
As depicted in Fig.~\ref{fig:system-model}, we analyze a \ac{miso} wireless system where the \ac{bs} is equipped with a \ac{upa}. 
This array consists of $M=M^{h}M^{v}$ \ac{ras}, with $M^h$ and $M^v$ representing the number of rows and columns, respectively. 
The position of the center of the \ac{bs} and the $m$-th \ac{ra} are denoted by $\mathbf{p}_b$ and $\mathbf{p}_{b,m}$, respectively. 
Each \ac{ra} is capable of synthesizing a desired pattern. 
The \ac{ue} is assumed to have a single isotropic antenna located at the position $\mathbf{p}_u$. 
Assuming each \ac{mpc} arises from a single reflection at point scatterers, 
we model multipath using $I$ scatterers with positions $\{\mathbf{p}_s^{(i)}\}_{i=1}^{I}$ based on a geometric channel model, which will be detailed in the subsequent subsection. 
For localization, the \ac{bs} transmits $N_t$ narrowband, single-carrier pilots to the \ac{ue}.

\begin{figure}
\centering
\includegraphics[width=0.8\columnwidth]{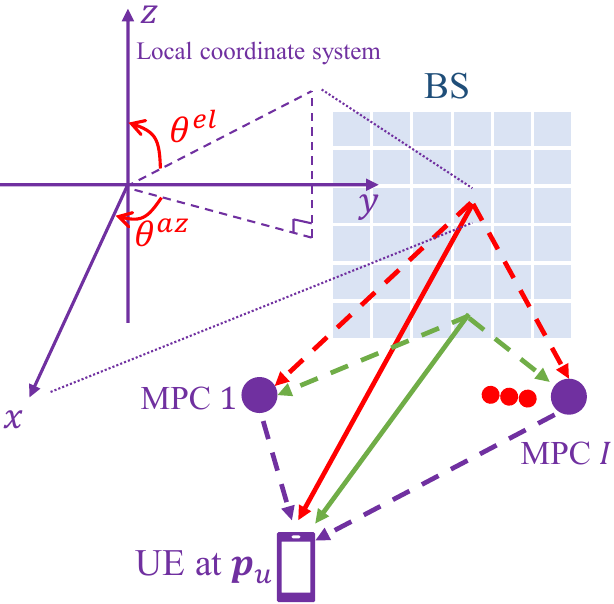}
\vspace{0.1cm}
\caption{
Considered near-field downlink localization system using an \ac{ra} array.
}
\label{fig:system-model}
\end{figure}

\subsection{Signal Model}
The received signal corresponding to the $t$-th symbol can be expressed as:
\begin{equation}\label{eq:y-def}
y_{t}
=
\sqrt{P}
\mathbf{h}_{t}^{\mathsf{T}}
\mathbf{f}_{t}
+
v_{t},
\end{equation}
where $P$ denotes the transmit power, $\mathbf{f}_{t} \in \mathbb{C}^{M}$ is the digital precoder for the $t$-th transmission, $\mathbf{h}_{t} \in \mathbb{C}^{M}$ represents the wireless channel between the \ac{bs} and the \ac{ue}, and $v_{t} \sim \mathcal{CN}(0,\sigma^2_v)$ is the additive white Gaussian noise. 
The digital precoders are normalized to satisfy $\sum_{t=1}^{N_t} \mathbf{f}_t^\mathsf{H}\mathbf{f}_t=1$, which ensures a constant total transmit power over all transmissions. 
The channel's dependence on the time index $t$ stems from the time-variant nature of the \ac{ra}, as will be elaborated upon later.

Under the assumption of a geometric channel model that includes a \ac{los} path and $I$ \ac{nlos} paths, the spatial-domain channel vector $\mathbf{h}_{t}$ is given by \cite{wang2025RA,Ying2024Reconfigurable}:
\begin{equation}\label{eq:ch-def}
\mathbf{h}_{t}
=
\sum_{i=0}^{I}
\beta_i
\mathbf{q}_{t}(\mathbf{p}^{(i)}_s),
\end{equation}
where $\beta_i$ is the overall complex channel gain of the $i$-th path, which can be modeled as \cite{Rahal2024RIS, Tang2021path, fadakar2025near}: 
\begin{align}
\label{eq:beta}
\beta_i =
\begin{cases}
\frac{\lambda}{4\pi d_{u}} 
e^{j\psi_0}
& i=0, \text{\ac{los} path,}\\
\frac{\sqrt{4\pi \sigma_i}\lambda}{16\pi^2 d_{b,i} d_{u,i}}
e^{j\psi_i}
& \text{$i$-th \ac{nlos} path,}
\end{cases}
\end{align}
where $\lambda$ is the wavelength and $\sigma_i$ is the \ac{rcs} of the $i$-th scatterer; alternative models for the attenuation of the \ac{mpc}s can also be easily incorporated. 
In \eqref{eq:beta}, $\psi_i$ is the random phase shift corresponding to the $i$-th path. 
The distances are defined as $d_{u}=\lVert \mathbf{p}_{b}-\mathbf{p}_u\rVert$ (from \ac{bs} to the \ac{ue}), $d_{b,i}=\lVert \mathbf{p}_{b}-\mathbf{p}_{s,i}\rVert$ (from \ac{bs} to the $i$-th scatterer), and $d_{u,i}=\lVert \mathbf{p}_{s}^{(i)}-\mathbf{p}_u\rVert$ (from the $i$-th scatterer to the \ac{ue}).

In \eqref{eq:ch-def}, the vector $\mathbf{q}_{t}(\mathbf{p}^{(i)}_s) \in \mathbb{C}^{M}$ is a composite term that accounts for the \ac{arv}, and the radiation patterns of the \ac{ra} array during the $t$-th transmission in the $i$-th path with $i=0$ corresponding to the \ac{los} path, and is defined as follows: 
\begin{equation}\label{eq:q-def}
\mathbf{q}_{t}(\mathbf{p}^{(i)}_s)
=
\mathbf{g}_{t}(\mathbf{p}^{(i)}_s)
\odot
\mathbf{a}(\mathbf{p}^{(i)}_s),
\end{equation}
where, the vector $\mathbf{a}(\mathbf{p})$ denotes the \ac{nf} \ac{arv} of the \ac{bs} whose $m$-th element is defined as follows \cite{Rahal2024RIS, fadakar2025near}:
\begin{equation}\label{eq:NF-str-vec}
[\bm{a}(
\bm{p}
)]_m
=
\text{exp}
\left(
-j
\frac{2\pi}{\lambda}
(
\lVert \bm{p}-\bm{p}_{b,m}\rVert
-
\lVert \bm{p}-\bm{p}_{b}\rVert
)
\right).
\end{equation}
Finally, the complex radiation patterns of the \ac{ras} are represented by the vector $\mathbf{g}_{t}(\mathbf{p}) \in \mathbb{C}^{M}$, with the $m$-th element:
\begin{equation}\label{eq:radiation_pat}
[\mathbf{g}_{t}(\mathbf{p})]_m
=
\mathbf{e}_{m,t}^{\mathsf{H}}
\mathbf{b}(\boldsymbol{\theta}_m),
\end{equation}
where $\boldsymbol{\theta}_m=[\theta^{\text{el}}_m,\theta^{\text{az}}_m]^{\mathsf{T}}$ is the 2D-\ac{aod} from the $m$-th antenna towards the position $\mathbf{p}$,  $\mathbf{b}(\boldsymbol{\theta}) = [b_{1}(\boldsymbol{\theta}), \dots, b_{Q}(\boldsymbol{\theta})]^{\mathsf{T}}$ represents a vector of $Q$ orthonormal basis functions, and the vector $\mathbf{e}_{m,t}\in\mathbb{C}^{Q}$ denotes the \ac{em} precoder vector, which assigns weights to the basis functions to synthesize the desired radiation pattern of the $m$-th \ac{ra} during the $t$-th transmission. 
The details are provided in the following subsection.

\subsection{Reconfigurability Model}
Motivated by the proven usefulness of \ac{shod} functions for plane-wave expansion, array-manifold factorization, and radiation-pattern decomposition, we construct the basis functions using \ac{shod} \cite{Ying2024Reconfigurable, Fadakar2026Hybrid}. 
These functions inherently satisfy orthonormality on the unit sphere, making them well-suited for spherical-domain expansions.
In particular, the complex spherical harmonics of degree $\ell\ge 0$ and order $-\ell\le u\le\ell$ are defined as \cite{Costa2010Unified}:
\begin{equation}
Y_{\ell,u}(\theta^{\text{el}}, \theta^{\text{az}})
= (-1)^u\,N_{\ell, u}\,P_{\ell,u}(\cos\theta^{\text{el}})\,e^{\,j u\theta^{\text{az}}},
\label{eq:Ylm_complex}
\end{equation}
where the normalization constant is
$
N_{\ell, u}
=\sqrt{\frac{2\ell+1}{4\pi}\,\frac{(\ell-u)!}{(\ell+u)!}},
$
and $P_{\ell,u}(x)$ denotes the associated Legendre function.
These functions form an orthonormal set.
It is important to note that a finite-sized antenna element cannot produce any arbitrary combination of coefficients of the infinite set, but that the size of the elements leads to an upper limit on the effective number of modes that can be excited \cite{Han2021Characteristic,bucci2003spatial}. 
Thus, we use a truncated dictionary of size $Q$, by selecting the first $Q$ degree-order pairs ordered by increasing degree $\ell$ and, for equal $\ell$, by increasing order $u$ (e.g., $(0,0),(1,-1),(1,0),(1,1),(2,-2),\dots$), and denote the $k$-th basis by $Y_k(\theta^{\text{el}},\theta^{\text{az}})=b_k(\boldsymbol{\theta})$ \cite{Ying2024Reconfigurable}.

To adhere to the law of energy conservation, a normalization constraint is imposed such that $\lVert \mathbf{e}_{m,t}\rVert^2=1$ \cite{Ying2024Reconfigurable, Fadakar2026Hybrid}. 
Subsequently, the coefficient matrix $\mathbf{E}_t\!\in\!\mathbb{C}^{M\times MQ}$ is defined as:
\begin{equation}\label{def:E_synthesis}
\mathbf{E}_t = \mathrm{blkdiag}\bigl\{\mathbf{e}_{1,t}^{\mathsf{H}},\,\dots,\,\mathbf{e}_{M,t}^{\mathsf{H}}\bigr\}.
\end{equation}
Then, $\mathbf{q}_{t}(\mathbf{p}_u)$ in \eqref{eq:q-def} can be represented as:
\begin{equation}\label{eq:q_represent}
\mathbf{q}_{t}(\mathbf{p})
= \mathbf{E}_t\,\mathbf{d}(\mathbf{p}),
\end{equation}
where $\mathbf{d}(\mathbf{p})\in\mathbb{C}^{MQ}$ 
denotes the effective \ac{arv} defined as
$ 
[\mathbf{d}(\mathbf{p})]_{(m-1)Q+1:mQ}
= 
[\mathbf{a}(\mathbf{p})]_m
\mathbf{b}(\boldsymbol{\theta}_m)
$.
By substituting \eqref{eq:q_represent} into \eqref{eq:ch-def}, the received signal expression in \eqref{eq:y-def} is obtained as:
\begin{equation}\label{eq:y-def-simp}
y_{t}
=
\sqrt{P}
\left(
\beta_0
\mathbf{d}(\mathbf{p}_u)^{\mathsf{T}}
+
\sum_{i=1}^{I}
\beta_i
\mathbf{d}(\mathbf{p}_s^{(i)})^{\mathsf{T}}
\right)
\mathbf{E}_t^{\mathsf{T}}\,
\mathbf{f}_{t}
+
v_{t}\,.
\end{equation}

\section{Proposed Hybrid Design}\label{sec:hybrid_design}
This section derives the \ac{crb} for the 3D \ac{ue} position, which will then serve as the optimization objective for the joint design of the digital and \ac{em} precoders. 
Consistent with the approach in \cite{fadakar2024multi, fascista2022ris, fadakar2025mutual}, \ac{nlos} paths are treated as interference due to their significant path loss and the unknown number of such components. 
Consequently, we define the multipath- and noise-free version of \eqref{eq:y-def} as $x_t=\sqrt{P}\,\beta\,\mathbf{q}_{t}(\mathbf{p}_u)^\mathsf{T} \mathbf{f}_{t}=\sqrt{P}\,\beta\,\mathbf{d}(\mathbf{p}_u)^{\mathsf{T}}\mathbf{E}_t^\mathsf{T}\mathbf{f}_{t}$ where $\beta=\beta_0$ is defined in \eqref{eq:beta}. 
Let $\eta=[\mathbf{p}_u^\mathsf{T},\rho,\varphi]$ denote the unknown state parameters, where $\rho$ and $\varphi$ are the magnitude and phase components of $\beta$. 
The $(i,j)$-th element of the resulting \ac{fim} $\mathbf{J}\in\mathbb{C}^{5\times 5}$ is then expressed as \cite{fadakar2025near,fascista2022ris,Zhang2025Joint}:
\begin{align}\label{eq:fim_def}
[\mathbf{J}]_{i,j}
& =
\frac{2}{\sigma^2_v}
\sum_{t=1}^{N_t}
\Re
\bigg\{
\left(
\frac{\partial x_t}{\partial [\boldsymbol{\eta}]_{i}}
\right)^\mathsf{H}
\left(
\frac{\partial x_t}{\partial [\boldsymbol{\eta}]_{j}}
\right)
\bigg\}
,
\end{align}
Thus, the \ac{peb} at the \ac{ue} position $\mathbf{p}_\text{u}$ is given by:
\begin{equation}\label{eq:peb_def}
\text{PEB}
=
\mathrm{tr}([\mathbf{J}_{\boldsymbol{\eta}}^{-1}]_{1:3,1:3}).
\end{equation}
\subsection{Codewords Under Perfect Knowledge of UE Position}
Under the assumption of perfect knowledge of the \ac{ue} position, we formulate a joint optimization problem to minimize the \ac{peb} by designing the \ac{em} precoders $\{\mathbf{E}_t\}_{t=1}^{N_t}$ and digital precoders $\{\mathbf{f}_t\}_{t=1}^{N_t}$. The problem is stated as follows:
\begin{subequations} \label{optimization_problem}
\begin{align}
\underset{
\{\mathbf{E}_t,\mathbf{f}_t\}_{t=1}^{N_t}
}
{\textnormal{min}}
\quad  
&
\mathrm{PEB}
\left(
\{\mathbf{E}_t\}_{t=1}^{N_t},
\{\mathbf{f}_t\}_{t=1}^{N_t};\ 
\mathbf{p}_u
\right)
\label{def:opt-prob}
\\
\textnormal{s.t.} \quad
&
\lVert [\mathbf{E}_t]_{m,[(m-1)Q+1:mQ]}\rVert^2=1, \label{opt:em_const}
\\
&
\sum_{t=1}^{N_t}\mathbf{f}_t^\mathsf{H}\mathbf{f}_t=1, \label{opt:pow_const}
\\
&
m=1,\dots, M,\ 
t=1,\dots, N_t, \notag
\label{opt-const1}
\end{align}
\end{subequations}
where the constraints are defined such that \eqref{opt:em_const} enforces the unit normalization, while \eqref{opt:pow_const} ensures the total transmitted power $P$ is constant across all transmissions. 
Nonetheless, due to the coupling between the \ac{em} and digital precoders, the problem in \eqref{optimization_problem} is high-dimensional. 
To make it more tractable, we define the composite vectors $\{\mathbf{w}_t\}_{t=1}^{N_t}$ as follows:
\begin{equation}\label{def:w}
\mathbf{w}_t
=
\mathbf{E}_t^\mathsf{T}
\mathbf{f}_{t}
\in 
\mathbb{C}^{MQ}.
\end{equation}

Using this definition, our approach to solving \eqref{optimization_problem} involves a two-step process: 
we first find the \emph{low-dimensional structure} of the vectors $\{\mathbf{w}_t\}_{t=1}^{N_t}$, and then we determine the optimal structures for the \ac{em} and digital precoders. 
To this end, we first state the following lemma:
\begin{lemma}\label{lemma:W_affine}
The \ac{peb} in \eqref{eq:peb_def} is a convex function of $\mathbf{W}=\sum_{t=1}^{N_t}\mathbf{W}_t$ with $\mathbf{W}_t=\mathbf{w}_t\mathbf{w}_t^\mathsf{H}$.
\end{lemma}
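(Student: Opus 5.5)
The argument has two steps: show that the \ac{fim} in \eqref{eq:fim_def} is an affine (indeed linear) function of $\mathbf{W}$, and then use the fact that $\mathbf{J}\mapsto\mathrm{tr}([\mathbf{J}^{-1}]_{1:3,1:3})$ is convex on the cone of positive definite matrices. Convexity is preserved under composition with an affine map, so the \ac{peb} is convex in $\mathbf{W}$.

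\emph{Step 1 (affine FIM).} With \eqref{def:w}, the noise-free signal is $x_t=\sqrt{P}\,\beta\,\mathbf{d}(\mathbf{p}_u)^{\mathsf{T}}\mathbf{w}_t$. Its derivatives take the form $\partial x_t/\partial[\boldsymbol{\eta}]_i=\mathbf{c}_i^{\mathsf{T}}\mathbf{w}_t$, where the vectors $\mathbf{c}_i$ do not depend on $t$ or on the precoders:
\begin{itemize}
\item for the three position coordinates, $\mathbf{c}_i=\sqrt{P}\,\beta\,\partial\mathbf{d}(\mathbf{p}_u)/\partial[\mathbf{p}_u]_i$, which absorbs the derivatives of $\mathbf{a}$ and of the angles $\boldsymbol{\theta}_m$ inside $\mathbf{b}$;
\item for the gain magnitude, $\mathbf{c}_i=\sqrt{P}\,e^{j\varphi}\mathbf{d}(\mathbf{p}_u)$;
\item for the gain phase, $\mathbf{c}_i=j\sqrt{P}\,\beta\,\mathbf{d}(\mathbf{p}_u)$.
\end{itemize}
If $\beta$ is treated as depending on $\mathbf{p}_u$ through $d_u$, the extra term is again of the form (constant vector)$^{\mathsf{T}}\mathbf{w}_t$ and is folded into $\mathbf{c}_i$.

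Substituting into \eqref{eq:fim_def} gives
\[
[\mathbf{J}]_{i,j}=\frac{2}{\sigma_v^2}\Re\Big\{\sum_{t}\mathbf{w}_t^{\mathsf{H}}\mathbf{c}_i^{*}\mathbf{c}_j^{\mathsf{T}}\mathbf{w}_t\Big\}
=\frac{2}{\sigma_v^2}\Re\big\{\mathrm{tr}\big(\mathbf{c}_i^{*}\mathbf{c}_j^{\mathsf{T}}\mathbf{W}\big)\big\}.
\]
Each entry is therefore a real-linear function of $\mathbf{W}$, so $\mathbf{J}=\mathcal{L}(\mathbf{W})$ for a linear map $\mathcal{L}$ from Hermitian matrices to real symmetric $5\times5$ matrices. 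The dependence on the individual $\mathbf{w}_t$ enters only through $\mathbf{W}$.

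\emph{Step 2 (convexity of the PEB in $\mathbf{J}$).} Write
\[
\mathrm{tr}([\mathbf{J}^{-1}]_{1:3,1:3})=\sum_{k=1}^{3}\mathbf{u}_k^{\mathsf{T}}\mathbf{J}^{-1}\mathbf{u}_k ,
\]
where $\mathbf{u}_k$ are the canonical unit vectors. Each term $\mathbf{u}^{\mathsf{T}}\mathbf{J}^{-1}\mathbf{u}$ is convex on the positive definite cone. One way to see this is through its epigraph: by the Schur complement, $\mathbf{u}^{\mathsf{T}}\mathbf{J}^{-1}\mathbf{u}\le s$ holds if and only if
\[
\begin{bmatrix}\mathbf{J}&\mathbf{u}\\ \mathbf{u}^{\mathsf{T}}&s\end{bmatrix}\succeq 0 ,
\]
which is a \ac{lmi} in $(\mathbf{J},s)$ and hence a convex set. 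Alternatively, matrix convexity of the inverse gives the same conclusion. A sum of convex functions is convex, and composing with the linear map $\mathcal{L}$ preserves convexity on the domain $\{\mathbf{W}\succeq0:\mathcal{L}(\mathbf{W})\succ0\}$, which is itself convex.

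The main care point is Step 1: the derivatives with respect to $\mathbf{p}_u$ must act only on $\mathbf{d}$ (and $\beta$), never on $\mathbf{w}_t$. This holds because the precoders are design variables independent of $\boldsymbol{\eta}$. One must also note that the real part of a trace remains linear over the real vector space of Hermitian matrices. Finally, the statement should be understood on the set where $\mathbf{J}$ is invertible; outside it the \ac{peb} is taken as $+\infty$, which is consistent with convexity as an extended-value function.
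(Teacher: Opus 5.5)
Your proof is correct and follows the same route as the paper. You show that every \ac{fim} entry is real-linear in $\mathbf{W}$ (the paper writes this out only for the position block and omits the rest), and then you use convexity of $\mathbf{J}\mapsto\mathrm{tr}([\mathbf{J}^{-1}]_{1:3,1:3})$ under composition with an affine map; your Schur-complement argument for the outer function, the uniform treatment of all five parameters, and the remark on the domain where $\mathbf{J}$ is invertible spell out what the paper leaves to ``the composition rule'' and ``details omitted''.
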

\begin{proof}
Note that $x_t=\sqrt{P}\,\beta\,\mathbf{d}(\mathbf{p}_u)^{\mathsf{T}}\mathbf{E}_t^\mathsf{T}\mathbf{f}_{t}=\sqrt{P}\,\beta\,\mathbf{d}(\mathbf{p}_u)^{\mathsf{T}}\mathbf{w}_{t}$. 
Thus, for instance, the elements of the first $3\times 3$ submatrix can be obtained as:
\begin{align}\label{eq:fim_rep}
[\mathbf{J}]_{i,j}
& =
\frac{2P}{\sigma^2_v}
\sum_{t=1}^{N_t}
\Re
\bigg\{
\mathbf{d}_{t}^{(i)}(\mathbf{p}_u)^\mathsf{T}
\mathbf{w}_{t}
\mathbf{w}_{t}^\mathsf{H}
\mathbf{d}^{(j)}_{t}(\mathbf{p}_u)^{*}
\bigg\} \notag \\
& =
\frac{2P}{\sigma^2_v}
\Re
\bigg\{
\mathbf{d}_{t}^{(i)}(\mathbf{p}_u)^\mathsf{T}
\mathbf{W}
\mathbf{d}^{(j)}_{t}(\mathbf{p}_u)^{*}
\bigg\}
,
\end{align}
where 
$
\mathbf{d}^{(i)}(\mathbf{p})
=
\partial\mathbf{d}(\mathbf{p})/
\partial [\mathbf{p}_u]_{i}
$. 
Hence, these elements depend linearly on the matrix $\mathbf{W}$, and in a similar manner, the remaining \ac{fim} elements can also be shown to exhibit linear dependence on $\mathbf{W}$ (details omitted for space efficiency). 
As a result, the \ac{peb} in \eqref{eq:peb_def} is a convex function of $\mathbf{W}$ via the composition rule \cite{boyd2004convex}.
\end{proof}

To transform the non-convex problem in \eqref{optimization_problem} into a tractable convex form, we first express it in terms of the matrix $\mathbf{W}$. 
In this reformulation, we assume that a set of feasible digital and \ac{em} precoders $\{\mathbf{E}_t,\mathbf{f}_t\}_{t=1}^{N_t}$ can be recovered from the optimal solution for $\mathbf{W}$. 
Consequently, we can drop the constraint \eqref{opt:em_const}, leading to the following problem:
\begin{subequations} \label{opt_prob_represent}
\begin{align}
\underset{
\mathbf{W}
}
{\textnormal{min}}
\quad  
&
\mathrm{PEB}
\left(
\mathbf{W};
\mathbf{p}_u
\right)
\label{def:opt-prob-represent}
\\
\textnormal{s.t.} \quad
&
\mathrm{tr}\left(
\mathbf{W}
\right)=1
, \label{opt:pow_const_w}
\\
& 
\mathrm{rank}(\mathbf{W})\le N_t \,,
\label{opt:rank_1_constraints}
\end{align}
\end{subequations}
where the power constraint in \eqref{opt:pow_const_w} is equivalent to \eqref{opt:pow_const}. 
We then relax the problem by dropping \eqref{opt:rank_1_constraints} to obtain a convex problem. 
To achieve a low-complexity solution, in a similar manner to \cite[Appendix~B]{Fadakar2026Hybrid}, we exploit the inherent structure of the optimal matrix $\mathbf{W}$, which can be represented as:
\begin{equation}\label{eq:w_low_dim_structure}
\mathbf{W}
=
\mathbf{D}_w
\mathbf{\Xi}
\mathbf{D}_w^\mathsf{H} \,,
\end{equation}
where $\mathbf{\Xi}\in\mathbb{C}^{4\times 4}$ is a \ac{psd} matrix, and
$
\mathbf{D}_w
=
[
\mathbf{d}^{(0)}(\mathbf{p}_u)^{*},
\mathbf{d}^{(1)}(\mathbf{p}_u)^{*},
\mathbf{d}^{(2)}(\mathbf{p}_u)^{*},
\mathbf{d}^{(3)}(\mathbf{p}_u)^{*}
]\in \mathbb{C}^{MQ\times 4}
$, where $\mathbf{d}^{(0)}(\mathbf{p}_u)=\mathbf{d}(\mathbf{p}_u)$. 
Finally, by relaxing $\mathbf{\Xi}$ to be a diagonal matrix, we obtain the following four codewords $\{\mathbf{w}_i\}_{i=1}^{4}$ achieving the approximate optimal value of \eqref{opt_prob_represent}:
\begin{equation}\label{eq:w_opt_synthesis}
\hat{\mathbf{w}}_i=
\sqrt{\varrho_i}\,
\tilde{\mathbf{d}}^{(i)}(\mathbf{p}_u)^{*},
\end{equation}
where 
$\tilde{\mathbf{d}}^{(i)}(\mathbf{p}_u)
=
\mathbf{d}^{(i)}(\mathbf{p}_u)/\lVert \mathbf{d}^{(i)}(\mathbf{p}_u)\rVert$
and $1\ge \varrho_i\ge 0$ is the portion of power dedicated to the $i$-th codeword.

\begin{proposition}\label{prop:opt_precoders}
Given the four codewords obtained in \eqref{eq:w_opt_synthesis}, the associated digital and \ac{em} precoders are given by:
\begin{align}\label{optimal_EM_f_synthesis}
[\hat{\mathbf{f}}_{i}]_m
=
\sqrt{\varrho_i}
\lVert\mathbf{d}_{m}^{(i)}(\mathbf{p}_u)\rVert
e^{j \psi_{m,i}}
,\ 
\hat{\mathbf{e}}_{m,i}
=
\frac{\mathbf{d}_{m}^{(i)}(\mathbf{p}_u)^{*}}{\lVert \mathbf{d}_{m}^{(i)}(\mathbf{p}_u)\rVert}
e^{-j \psi_{m,i}},
\end{align}
where $\mathbf{d}^{(i)}_m(\mathbf{p}_u)=[\tilde{\mathbf{d}}^{(i)}(\mathbf{p}_u)]_{((m-1)Q+1):mQ}$, and $\{\psi_{m,i}\}_{m=1,i=0}^{M,3}$ are arbitrary phases.
\end{proposition}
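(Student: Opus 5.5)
The plan is to verify directly that the proposed pair $(\hat{\mathbf{f}}_i,\{\hat{\mathbf{e}}_{m,i}\}_{m=1}^M)$ is feasible for \eqref{optimization_problem} and reproduces the codeword \eqref{eq:w_opt_synthesis} through the factorization \eqref{def:w}, i.e., $\mathbf{E}_i^\mathsf{T}\hat{\mathbf{f}}_i=\hat{\mathbf{w}}_i$. The key observation is that, because $\mathbf{E}_i$ in \eqref{def:E_synthesis} is block diagonal, the product decouples across antennas. Writing the $m$-th length-$Q$ block of $\mathbf{E}_i^\mathsf{T}\mathbf{f}_i$ explicitly, we have $[\mathbf{E}_i^\mathsf{T}\mathbf{f}_i]_{(m-1)Q+1:mQ}=\mathbf{e}_{m,i}^{*}[\mathbf{f}_i]_m$, since the $m$-th row of $\mathbf{E}_i$ is $\mathbf{e}_{m,i}^\mathsf{H}$ placed in columns $(m-1)Q+1,\dots,mQ$. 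The matching condition therefore reduces to $M$ independent rank-one factorizations
\begin{equation*}
\mathbf{e}_{m,i}^{*}\,[\mathbf{f}_i]_m=\sqrt{\varrho_i}\,\mathbf{d}_m^{(i)}(\mathbf{p}_u)^{*},\qquad m=1,\dots,M,
\end{equation*}
with the constraint $\lVert\mathbf{e}_{m,i}\rVert=1$ from \eqref{opt:em_const}.

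Each such factorization of a vector into a unit-norm direction times a scalar is unique up to a common phase. Taking norms gives $|[\mathbf{f}_i]_m|=\sqrt{\varrho_i}\lVert\mathbf{d}_m^{(i)}(\mathbf{p}_u)\rVert$, and the direction must be $\mathbf{e}_{m,i}^{*}=\mathbf{d}_m^{(i)}(\mathbf{p}_u)^{*}/\lVert\mathbf{d}_m^{(i)}(\mathbf{p}_u)\rVert$ times a unit-modulus factor. Following the conjugation conventions carefully, including that $\hat{\mathbf{e}}_{m,i}$ enters conjugated, is what fixes the placement of $e^{j\psi_{m,i}}$ in $\hat{\mathbf{f}}_i$ and $e^{-j\psi_{m,i}}$ in $\hat{\mathbf{e}}_{m,i}$. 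I will substitute \eqref{optimal_EM_f_synthesis} back in and check that the phases cancel in the product. If, with the paper's convention, the conjugate lands on the other side, I will adjust the statement of $\hat{\mathbf{e}}_{m,i}$ by a conjugation accordingly. This bookkeeping of conjugates is the only real obstacle, and I will resolve it by writing out a single block explicitly.

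Next comes feasibility. The unit-norm constraint \eqref{opt:em_const} holds by construction. For the power constraint \eqref{opt:pow_const}, we compute
\begin{equation*}
\sum_i\lVert\hat{\mathbf{f}}_i\rVert^2=\sum_i\varrho_i\sum_m\lVert\mathbf{d}_m^{(i)}\rVert^2=\sum_i\varrho_i\lVert\tilde{\mathbf{d}}^{(i)}\rVert^2=\sum_i\varrho_i,
\end{equation*}
which equals $\mathrm{tr}(\mathbf{W})=1$ provided the power fractions sum to one, as required by \eqref{opt:pow_const_w}.

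Finally, I will note two degenerate cases. The first is when some block $\mathbf{d}_m^{(i)}$ vanishes; then $[\hat{\mathbf{f}}_i]_m=0$ and $\hat{\mathbf{e}}_{m,i}$ may be any unit vector. The second is that the arbitrariness of $\psi_{m,i}$ is exactly the per-antenna phase ambiguity of the factorization, so the family in \eqref{optimal_EM_f_synthesis} describes all exact decompositions. Since $N_t$ transmissions are grouped into these four codewords with $\mathbf{W}=\sum_i\hat{\mathbf{w}}_i\hat{\mathbf{w}}_i^\mathsf{H}$, the rank constraint \eqref{opt:rank_1_constraints} is also met whenever $N_t\ge4$.
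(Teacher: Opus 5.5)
Your approach is the paper's. Both arguments use the block-diagonal structure of $\mathbf{E}_i$ to split $\mathbf{E}_i^{\mathsf{T}}\mathbf{f}_i=\hat{\mathbf{w}}_i$ into $M$ independent ``scalar times unit vector'' factorizations, then read off the modulus and direction. The one thing to fix is the conjugation. There you are more careful than the paper, but you then hedge and state the conclusion backwards.

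\textbf{The block identity.} Your identity $[\mathbf{E}_i^{\mathsf{T}}\mathbf{f}_i]_{(m-1)Q+1:mQ}=\mathbf{e}_{m,i}^{*}[\mathbf{f}_i]_m$ is the correct one for $\mathbf{E}_t=\mathrm{blkdiag}\{\mathbf{e}_{1,t}^{\mathsf{H}},\dots,\mathbf{e}_{M,t}^{\mathsf{H}}\}$. The paper instead writes this block as $[\mathbf{f}_i]_m\mathbf{e}_{m,i}$. That is exactly how it arrives at $\hat{\mathbf{e}}_{m,i}\propto\mathbf{d}_m^{(i)}(\mathbf{p}_u)^{*}e^{-j\psi_{m,i}}$.

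\textbf{What your identity actually gives.} Your own direction step, $\mathbf{e}_{m,i}^{*}\propto\mathbf{d}_m^{(i)}(\mathbf{p}_u)^{*}$, already yields $\hat{\mathbf{e}}_{m,i}=\mathbf{d}_m^{(i)}(\mathbf{p}_u)\,e^{j\psi_{m,i}}/\lVert\mathbf{d}_m^{(i)}(\mathbf{p}_u)\rVert$, with $[\hat{\mathbf{f}}_i]_m$ unchanged. Suppose you substitute the formula as stated instead. The $m$-th block becomes $\sqrt{\varrho_i}\,\mathbf{d}_m^{(i)}(\mathbf{p}_u)\,e^{2j\psi_{m,i}}$. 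This is not $\sqrt{\varrho_i}\,\mathbf{d}_m^{(i)}(\mathbf{p}_u)^{*}$ unless that block is real up to a common phase. That fails in general, because the complex spherical harmonics carry distinct factors $e^{ju\theta^{\text{az}}}$.

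\textbf{Consequences for your write-up.} Your planned ``check that the phases cancel'' would fail for the stated formula. Your sentence that the conjugated entry of $\hat{\mathbf{e}}_{m,i}$ ``fixes'' $e^{-j\psi_{m,i}}$ in $\hat{\mathbf{e}}_{m,i}$ is backwards: with the conjugate, $\hat{\mathbf{e}}_{m,i}$ must carry $e^{+j\psi_{m,i}}$. Commit to the conjugated form. Equivalently, note that the stated formula holds if $\mathbf{E}_t$ is built from $\mathbf{e}_{m,t}^{\mathsf{T}}$ rather than $\mathbf{e}_{m,t}^{\mathsf{H}}$. This is a notational slip in the paper, not a flaw in your method.

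\textbf{Parts worth keeping.} The following additions go beyond the paper's proof and are correct:
\begin{itemize}
\item The feasibility check $\sum_i\lVert\hat{\mathbf{f}}_i\rVert^2=\sum_i\varrho_i\lVert\tilde{\mathbf{d}}^{(i)}\rVert^2=\sum_i\varrho_i$, which uses that the $\mathbf{d}_m^{(i)}$ are blocks of the normalized $\tilde{\mathbf{d}}^{(i)}$.
\item The remark on the rank constraint.
\item The zero-block caveat. It also shows that the paper's word ``unique'' should read ``unique up to the per-antenna phase $\psi_{m,i}$, provided $\mathbf{d}_m^{(i)}(\mathbf{p}_u)\neq\mathbf{0}$''.
\end{itemize}
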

\begin{proof}
To obtain the corresponding digital and \ac{em} precoders, we use \eqref{def:w}, which establishes a relation between the vector $\mathbf{w}_i$, $\mathbf{E}_i$ and $\mathbf{f}_{i}$. 
According to the definition of $\mathbf{E}_i$, we obtain the following equation:
\begin{equation}\label{eq:w_f_e}
[\mathbf{w}_i]_{(m-1)Q+1:mQ}
=
[\mathbf{f}_{i}]_m
\mathbf{e}_{m,i}
\end{equation}
for $m=1,\dots ,M$. 
By substituting \eqref{eq:w_opt_synthesis} in \eqref{eq:w_f_e}, we obtain:
$
[\mathbf{f}_{i}]_m
\mathbf{e}_{m,i}
=
\sqrt{\varrho_i}\,\mathbf{d}^{(i)}_m(\mathbf{p}_u)^{*}.
$
Since $\lVert \mathbf{e}_{m,i}\rVert^2 = 1$, the unique solutions in \eqref{optimal_EM_f_synthesis} are found for $[\mathbf{f}_{i}]_m$ and $\mathbf{e}_{m,i}$ which completes the proof.
\end{proof}
\vspace{-0.5cm}
\subsection{Proposed Codebook Under UE Position Uncertainty}\label{sec:codebook_synthesis}
In the previous section, we derived a low-complexity, joint digital and \ac{em} codebook from the optimal solutions of \eqref{optimization_problem}, designed to minimize the \ac{peb} under the assumption of exact state knowledge of the \ac{ue} position. 
In practice, however, the \ac{ue} location is typically uncertain, as there would be no need for a measurement if we knew it exactly.
Let $\bm{\mathcal{U}}$ denote the 3D region that captures the \ac{ue} position uncertainty. 
To design for robustness, we sample $L$ uniformly spaced candidate \ac{ue} locations $\{\mathbf{p}_u^{(\ell)}\}_{\ell=1}^{L}$ that cover $\bm{\mathcal{U}}$. 
We then construct the following codebook of $N_t=4L$ codewords $\widehat{\mathbf{w}}_i$ (from \eqref{eq:w_opt_synthesis}):
\begin{align}
\bm{\mathcal{W}}
=
\bigg\{
&\sqrt{\varrho_{4\ell -3}} 
\widehat{\mathbf{w}}_0(\mathbf{p}_u^{(\ell)}),
\sqrt{\varrho_{4\ell -2}} 
\widehat{\mathbf{w}}_1(\mathbf{p}_u^{(\ell)}),\notag \\
&
\sqrt{\varrho_{4\ell-1}} 
\widehat{\mathbf{w}}_3(\mathbf{p}_u^{(\ell)})
,
\sqrt{\varrho_{4\ell}} 
\widehat{\mathbf{w}}_4(\mathbf{p}_u^{(\ell)})
\bigg\}_{\ell=1}^{L}.
\label{def:synthesis_codebook_w}
\end{align}
The corresponding digital and \ac{em} precoders are obtained from Prop.~\ref{prop:opt_precoders}. 
In \eqref{def:synthesis_codebook_w}, we explicitly separate the power allocation coefficients to highlight that they are the only remaining design variables and to streamline the presentation of their optimization in the next subsection.

The optimization problem for power allocation within the uncertainty region $\bm{\mathcal{U}}$ is formulated as follows:
\begin{subequations} \label{opt:power_alloc}
\begin{align}
\underset{
\{\varrho_t\}_{t=1}^{N_t}
}
{\textnormal{min}}
\,
\underset{
\bm{p}_{u}\in\bm{\mathcal{U}}
}
{\textnormal{max}}
\quad  
&
\mathrm{PEB}
\left(
\{
\varrho_t\widehat{\mathbf{W}}_t
\}_{t=1}^{N_t}
;\ 
\mathbf{p}_{u}
\right)
\\
\textnormal{s.t.} \quad
&
\varrho_t\ge 0,\ 
\sum_{t=1}^{N_t}
\varrho_t=1,
\label{opt:delta_sum}
\end{align}
\end{subequations}
where $\varrho_t\widehat{\mathbf{W}}_t=\varrho_t\widehat{\mathbf{w}}_t\widehat{\mathbf{w}}_t^\mathsf{H}$ denotes the covariance matrix associated with the $t$-th codeword in \eqref{def:synthesis_codebook_w}. 
By utilizing the epigraph form, we first derive the semi-infinite version of this optimization problem \cite{boyd2004convex}:
\begin{subequations} \label{opt:power_alloc_step2}
\begin{align}
\underset{
s,\{\varrho_t\}_{t=1}^{N_t}
}
{\textnormal{min}}
\quad  
&
s
\\
\textnormal{s.t.} \quad
&
\mathrm{PEB}
\left(
\{
\varrho_t\widehat{\mathbf{W}}_t
\}_{t=1}^{N_t}
;\ 
\mathbf{p}_{u}
\right)
\le s,\ 
\forall \mathbf{p}_u\in\bm{\mathcal{U}},
\\
& 
\eqref{opt:delta_sum}
.
\end{align}
\end{subequations}
By discretizing the uncertainty region $\bm{\mathcal{U}}$ into a finite set of $N_u$ points $\{\mathbf{p}_{u,i}\}_{i=1}^{N_u}$, we convert the semi-infinite problem into the following approximated formulation using the Schur complement \cite{boyd2004convex}:
\begin{subequations} \label{opt:power_alloc_step3}
\begin{align}
& 
\underset{
s,
\{\varrho_t\}_{t=1}^{N_t}, 
\{u_{i,m}\}_{i=1,m=1}^{N_u,3}
}
{\textnormal{min}}
s
\\
\textnormal{s.t.} \quad
&
\begin{bmatrix}
\mathbf{J}_{\bm{\eta}}\left(
\{
\varrho_t\widehat{\mathbf{W}}_t
\}_{t=1}^{N_t}
;\ 
\mathbf{p}_{u,i}
\right) & [\mathbf{I}_5]_{:,m}\\
[\mathbf{I}_5]_{:,m}^{\mathsf{T}} & u_{i,m}
\end{bmatrix}
\succeq 0,
\label{opt:lmi_ineq}
\\
& 
\sum_{m=1}^{3}u_{i,m} \le s,
\\
& 
\eqref{opt:delta_sum}, \notag
\\
&
m=1,2,3,\ i=1,\dots, N_u.
\notag
\end{align}
\end{subequations}
Based on the result of Lemma~\ref{lemma:W_affine}, the elements of the \ac{fim} are affine with respect to the terms $\{\varrho_{t}\widehat{\mathbf{W}}_{t}\}_{t=1}^{N_{t}}$. 
This confirms that the constraint in \eqref{opt:lmi_ineq} is a \ac{lmi} in the optimization variables $\{\varrho_{t}\}_{t=1}^{N_{t}}$ and $\{u_{i,m}\}$. 
Therefore, the power-allocation problem \eqref{opt:power_alloc_step3} is a convex \ac{sdp} \cite{boyd2004convex} in the coefficients $\{\varrho_{t}\}_{t=1}^{N_t}$, and can thus be solved efficiently using readily available convex optimization solvers.
\vspace{-0.2cm}
\section{Proposed Localization Approach}\label{sec:ML_localization}
The \ac{ml} localization problem can be formulated as follows:
\begin{equation}\label{eq:localization_formulation}
\hat{\mathbf{p}}_u
=
\argmin_{\mathbf{p}}
\lVert
\mathbf{y}
-
\zeta
\mathbf{x}(\mathbf{p})
\rVert^2,
\end{equation}
where $\mathbf{y}\in\mathbb{C}^{N_t}$ is the received signal vector after stacking all $N_t$ signals in \eqref{eq:y-def}, and $\zeta=\sqrt{P}\,\beta$, $\mathbf{x}(\mathbf{p})^\mathsf{T}=\mathbf{d}(\mathbf{p})^{\mathsf{T}}\mathbf{W}$ are the likelihood signal models, where $\mathbf{W}=[\mathbf{w}_1,\dots ,\mathbf{w}_{N_t}]$.  
$\zeta$ 
can be estimated via \ac{ls} technique as 
$
\hat{\zeta}(\mathbf{p}) 
=
\frac{
\mathbf{x}(\mathbf{p})^{\mathsf{H}}
\mathbf{y}}{\lVert
\mathbf{x}(\mathbf{p})
\Vert^2}\,
$. 
After substituting this into \eqref{eq:localization_formulation} and simplifying, we obtain:
\begin{equation}\label{eq:opt_simp}
\hat{\mathbf{p}}_u
=
\argmax_{\mathbf{p}}
\frac{
\lvert
\mathbf{x}(\mathbf{p})^{\mathsf{H}}
\mathbf{y}
\rvert^2}{\lVert
\mathbf{x}(\mathbf{p})
\Vert^2}.
\end{equation}
The proposed method involves two stages, i.e., coarse positioning and position refinement, which are explained next.

\subsection{Coarse Stage}\label{sec:loc_coarse}
The uncertainty region is discretized into a grid mesh with coarse step sizes $x_g$, $y_g$ and $z_g$ along the respective dimensions. 
For each grid point $\bm{p}_{u,g}$, we compute the normalized vector
$
\frac{\mathbf{x}(\mathbf{p})}{\lVert \mathbf{x}(\mathbf{p}) \rVert}
$. 
By stacking these normalized vectors, we form the matrix $\mathbf{X} \in \mathbb{C}^{N_t \times N_g}$. 
To estimate the \ac{ue} position for a given observation vector $\bm{y}$, we first calculate the vector $\mathbf{z}=\mathbf{X}^\mathsf{H} \mathbf{y}$. 
The corresponding position estimate is then determined by selecting the grid point with the maximum absolute value in $\mathbf{z}$.
To improve efficiency, the matrix $\mathbf{X}$ can be precomputed and reused for positioning at any location.
\subsection{Refinement Stage}\label{sec:refine_stage}
The coarse-stage solution uses the accurate problem formulation \eqref{eq:opt_simp}. 
However, the grid-based approach only evaluates the solution at discrete points. 
In high \ac{snr} regimes, where the true optimal position lies between these grid points, this \emph{quantization effect} results in unavoidable numerical inaccuracies.
To mitigate this issue, we employ a gridless quasi-Newton approach \cite{fadakar2024multi}. 
This method uses the initial estimate obtained from the previous subsection to perform a precise, refined optimization of the position.
\vspace{-0.2cm}
\section{Simulations}
This section presents numerical simulations that evaluate the proposed joint \ac{em} and digital precoder design and its impact on localization in \ac{ra}-assisted mmWave systems. 
The default parameter set is given in Table~\ref{tab:sys-params}. 
Some parameters may vary as specified in each experiment. 
In the coarse localization stage (Sec.~\ref{sec:loc_coarse}), we first obtain an initial estimate using a step size of $x_g=y_g=z_g=1\,\mathrm{m}$. 
This estimate is subsequently refined by reapplying the same stage with a reduced step size of $x_g=y_g=z_g=0.1\,\mathrm{m}$ within a unit cube neighborhood. 
The final, accurate, gridless positioning is then achieved by utilizing the dedicated refinement stage explained in Sec.~\ref{sec:refine_stage}.

\begin{table}[ht]
\caption{\label{tab:sys-params} System parameters}
\centering
\fontsize{12}{10}\selectfont 
\resizebox{\columnwidth}{!}{
\begin{tabular}{ |l|l|  }
\hline
Default Parameters
&
\textbf{Value}
\\
\hline
Carrier frequency $f_c$ & $30\,\mathrm{GHz}$
\\
Noise PSD $N_0$ & $-173.855\,\mathrm{dBm}$
\\
Light speed $c$ & $3\times 10^8\,\mathrm{m/s}$
\\
Bandwidth $B$ & $1\,\mathrm{MHz}$
\\
UE position $\mathbf{p}_u$ & $[10.35, 1.67, 0]^\mathsf{T}$
\\
Number of bases 
$Q$ &  $20$
\\
Uncertainty region $\mathbf{\mathcal{U}}$ & $9<x<14,-3<y<3$, $0<z<4$
\\
\ac{bs} position $\mathbf{p}_b$ & $[0,0,5]^\mathsf{T}\,\mathrm{[m]}$
\\
\ac{bs} number of elements $M^{h}$, $M^v$ & $50$, $50$
\\
Localization parameters $N_\tau$, $N_\theta$ & $1000$, $500$
\\
\hline
\end{tabular}
}
\end{table}

\vspace{-0.5cm}
\subsection{Localization Performance Versus SNR}
Fig.~\ref{fig:RMSE_stages_SNR} compares the \ac{rmse} of the proposed coarse and fine localization stages with the theoretical \ac{peb}. 
The fine stage approaches the \ac{peb} (around $-5\,\mathrm{dB}$), while the coarse stage, due to its discrete nature, exhibits a performance floor and saturates at sufficiently high \ac{snr}.

Fig.~\ref{fig:RMSE_CRB_vs_SNR} plots the \ac{rmse} together with the corresponding \ac{peb} for the \ac{ml}-based estimator in Sec.~\ref{sec:ML_localization}, and benchmarks our \ac{ra} with joint \ac{em} and digital precoder optimization against the following baselines: 
the same \ac{ra} equipped with heuristic directional \ac{em} and digital precoders (i.e., using only directional codewords in \eqref{def:synthesis_codebook_w} with indices $\{4\ell-3\}_{\ell}^{L}$), and a conventional non-reconfigurable array using the optimal codebooks from \cite{Keskin2022Optimal, fascista2022ris} while matching aperture and antenna count for fairness. 
To establish a fair comparison, optimal power allocation is applied to the codewords for every method evaluated.
The results show that the \ac{ra} combined with our optimized precoders achieves substantially better localization accuracy than both the traditional array and heuristic directional beamforming, and that the \ac{ra} approach is notably more robust in low-\ac{snr} regimes.

\begin{figure}[!t]
\centering
\includegraphics[width=\columnwidth]{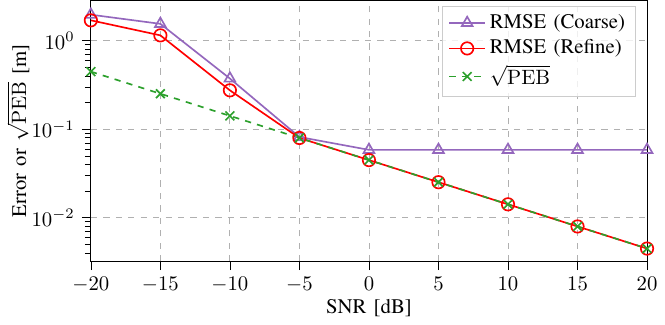}
\caption{
Localization performance of the proposed two-stage localization algorithm
}
\label{fig:RMSE_stages_SNR}
\end{figure}

\begin{figure}[!t]
\centering
\includegraphics[width=\columnwidth]{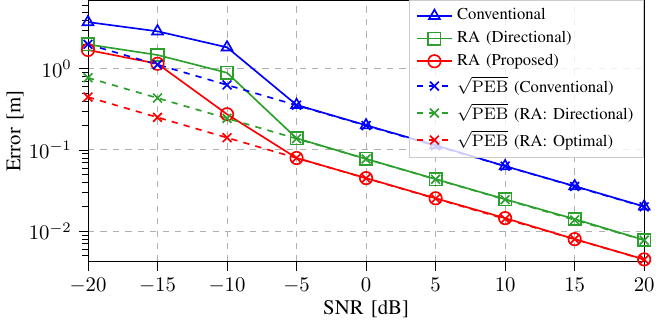}
\caption{
Localization performance comparison with different baselines.
}
\label{fig:RMSE_CRB_vs_SNR}
\end{figure}
\subsection{Performance Under Different Number of Bases}
Fig.~\ref{fig:CRB_vs_Q} shows the \ac{peb} as a function of the number of \ac{shod} bases $Q$, using an optimized non-reconfigurable array as baseline. 
The proposed method yields steadily improving \ac{peb} as $Q$ increases, whereas the baseline remains unchanged since it does not depend on $Q$. 
Notably, the two curves coincide at $Q=1$, which stems from the fact that the first \ac{shod} basis is constant on the unit sphere, so in this case, the \ac{ra} array reduces to the conventional optimal array.

\begin{figure}[!t]
\centering
\includegraphics[width=\columnwidth]{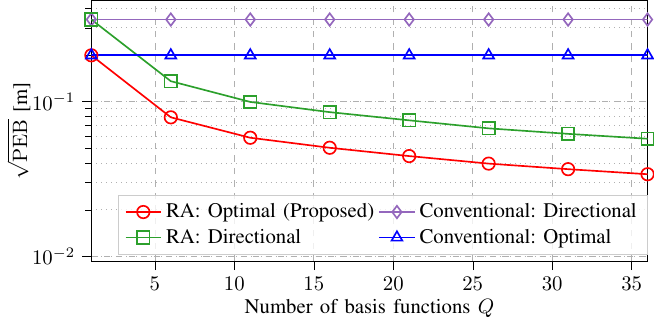}
\caption{
Localization performance versus different numbers of basis functions.
}
\label{fig:CRB_vs_Q}
\end{figure}

\subsection{Impact of Interference}\label{sec:simul_interference}
In this subsection, we examine the robustness of the proposed joint \ac{em} and digital precoder design to interference. 
We benchmark the proposed \ac{ra} with optimized joint precoders against the same \ac{ra} using heuristic directional beamforming and a conventional non-reconfigurable array. 
Interference is modeled by $I=10$ \ac{mpc_p} whose locations are drawn uniformly inside the \ac{ue} uncertainty region $\bm{\mathcal{U}}$ to create a challenging scenario. 
Fig.~\ref{fig:RMSE_CRB_vs_LMR} reports \ac{rmse} and \ac{peb} as functions of \ac{lmr} (using \cite[Eq.~(24)]{fadakar2024multi}) at $\mathrm{SNR}=15\,\mathrm{dB}$; each \ac{rmse} point is obtained from 500 Monte-Carlo trials. 
Because the \ac{fim} in Sec.~\ref{sec:hybrid_design} accounts only for the \ac{los} path, the \ac{peb} remains constant for each \ac{lmr}. The \ac{rmse} approaches the \ac{peb} near $\mathrm{LMR}=5\,\mathrm{dB}$, and the proposed \ac{ra} with optimized precoders is close to the \ac{peb} for $\mathrm{LMR}\in[10\,\mathrm{dB},15\,\mathrm{dB}]$. 
Performance degrades at low \ac{lmr} because strong multipath and a weak \ac{los} component cause model mismatch and hence poorer localization. 

\begin{figure}[!t]
\centering
\includegraphics[width=\columnwidth]{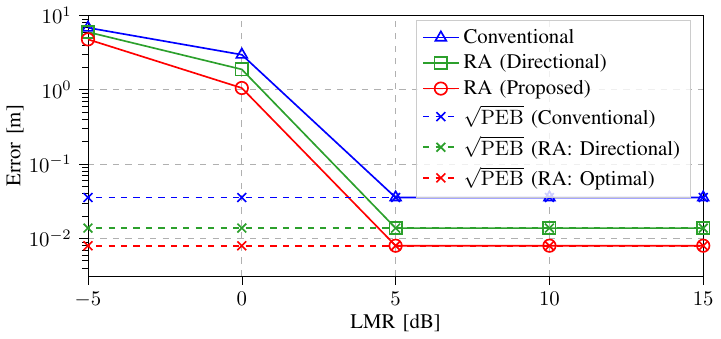}
\caption{
Localization performance under interference.
}
\label{fig:RMSE_CRB_vs_LMR}
\end{figure}

\section{Conclusion}
In this work, we addressed the joint design of digital and \ac{em} precoders for maximizing downlink localization accuracy in \ac{nf} scenarios utilizing an \ac{ra} array. 
Our approach adopted the synthesis reconfigurability model, wherein individual antenna elements synthesize beampatterns from a set of orthonormal basis functions. 
We developed low-complexity beamforming designs for both the digital and \ac{em} beamforming. 
Furthermore, we proposed an efficient \ac{ml}-based localization algorithm. 
Extensive simulations validated our approach, demonstrating substantial improvements in the \ac{peb} and \ac{rmse} compared to conventional non-reconfigurable arrays and purely directional beamforming schemes. 
Future research will extend these methods to \ac{isac} systems and investigate their robustness against model mismatch and practical hardware impairments.

\section*{Acknowledgment}
This work is supported, in part, 
by the National Science Foundation (Grants 2229535 and 2106602), 
in part by the KAUST Global Fellowship Program under Award No. RFS-2025-6844, and
in part by the Swedish Research Council (Grants 2022-03007 and 2024-04390).
\bibliographystyle{IEEEtran}
\bibliography{Bib}

\end{document}